%
%
%
%
%
\documentclass[twocolumn, 10pt]{svjour3}         
\smartqed  
\usepackage{graphicx}
\usepackage{bm}
\usepackage{amsmath}
\usepackage{amssymb}
\usepackage{latexsym}
\usepackage{epsfig}
\usepackage{amsbsy}
\usepackage{array}
\usepackage{amssymb}
\usepackage{setspace}
\usepackage[caption=false,font=footnotesize]{subfig}
\usepackage{algorithm}
\usepackage{algorithmic}
\usepackage{cite}
\usepackage{setspace}
\doublespacing
\DeclareMathOperator*{\E}{\mathbb{E}}

\journalname{Wireless Networks}

\begin{document}

\title{Capacity and Delay-Throughput Tradeoff in ICMNs with Poisson Meeting Process}


\author{Yin~Chen \and Yulong~Shen \and Jinxiao~Zhu \and Xiaohong~Jiang}

\institute{Y.~Chen, J.~Zhu and X.~Jiang \at School of Systems Information Science, Future University Hakodate, Japan.\\
							\email{ychen1986@gmail.com, jxzhu1986@gmail.com and jiang@fun.ac.jp}.   
           \and
           Y.~Shen \at School of Computer Science and Technology, Xidian University,  China.\\
           \email{ylshen@mail.xidian.edu.cn}. \\
}



\date{Received: date / Accepted: date}

\maketitle

\begin{abstract}

Intermittently connected mobile networks \\(ICMNs) serve as an important network model for many critical applications. This paper focuses on a continuous ICMN model where the pair-wise meeting process between network nodes
follows a homogeneous and independent Poisson process. This ICMN model is known to serve as good approximations to a class of important ICMNs with mobility models like random waypoint and random direction, so it is widely adopted in
the performance study of ICMNs. This paper studies the throughput capacity and delay-throughput tradeoff  in the considered ICMNs with Poisson meeting process.
For the concerned ICMN, we first derive an exact expression of its throughput capacity based on the pairwise meeting rate therein and analyze the expected end-to-end packet delay under a routing algorithm that can achieve the throughput capacity. We then explore the inherent tradeoff between delay and throughput and establish a necessary condition for such tradeoff that holds under any routing algorithm in the ICMN. To illustrate the applicability of the theoretical results, case studies are further conducted for the random waypoint and random direction mobility models. 
Finally, simulation and numerical results are provided to verify the efficiency of our theoretical capacity/delay results and to illustrate
our findings.

\keywords{Intermittently Connected Mobile Network (ICMN) \and Delay Tolerant Networks (DTN)  \and Throughput Capacity \and  Delay-Throughput tradeoff \and Poisson Process}
\end{abstract}

%



\section{Introduction}\label{sec:intro}

Intermittently connected mobile networks (ICMNs) or delay tolerant networks (DTNs) represent a class of sparse mobile ad hoc networks (MANETs), where a collection of self-autonomous mobile nodes communicate with each other via peer-to-peer wireless links without any support from preexisting infrastructures, but complete end-to-end path(s) between a node-pair may never exist so nodes mainly rely on mobility as well as basic packet storing, carrying, and forwarding operations to implement end-to-end communication (see e.g.,~\cite{Zhang2006} for a survey). ICMNs are highly flexible, robust and  rapidly deployable and reconfigurable, so they serve as an important model for many critical applications such as wildlife tracking and monitoring, battlefield communication, vehicular networks, low-cost Internet service for remote communities.

By now, much academic activity has been devoted to the performance study on ICMNs.
In the seminal work of~\cite{Groenevelt2005, Groenevelt2005a}, Groenevelt et al. demonstrated that the ICMN model with Poisson meeting process can approximately fit an important class of mobility models such as random waypoint, random direction and   random walk. 
Based on this ICMN model, the authors of~\cite{Groenevelt2005a} conducted  Markov chain-based analysis to evaluate the  performance under  two-hop routing and epidemic routing algorithms in terms of the packet delivery delay, i.e., the time it takes for a packet to reach its destination node after it departures from its source node.
Following this work, the packet delivery delay performance was extensively studied in literature~\cite{Zhang2007,AlHanbali2008,Spyropoulos2008,Spyropoulos2008a}.
Notice that while the Markov chain-based analysis enables the distribution of delivery delay to be calculated, the analysis quickly becomes cumbersome and computationally impractical as the network size (i.e., the number of network nodes) increases.
Motivated by this observation, Zhang et al.~\cite{Zhang2007} developed a theoretical framework based on ordinary differential equations which significantly reduce the complexity involved in the delivery delay analysis for large scale ICMNs.
For ICMNs with two-hop routing and packet life time constraint and ICMNs with spray and wait routing, the corresponding delivery delay performance was reported in~\cite{AlHanbali2008} and~\cite{Spyropoulos2008,Spyropoulos2008a}, respectively.
For the throughput performance, Subramanian  et al. explored the achievable throughput of ICMNs under two-hop routing~\cite{Subramanian2009Proc.IEEEISIT,Subramanian2009Proc.IEEEWiOPT} as well as under multi-hop routing~\cite{Subramanian2012Proc.IEEEICC}.

While the above works are helpful for us to have a preliminary understanding on the performance of ICMNs, further deliberate studies are needed to reveal the fundamental performance limits of such networks. 
First, the available throughput studies discussed above~\cite{Subramanian2009Proc.IEEEISIT,Subramanian2009Proc.IEEEWiOPT,Subramanian2012Proc.IEEEICC} only focus on the  throughput study in ICMNs under a specified routing algorithm, the  throughput capacity, i.e., the maximum  throughput over any routing algorithm, is still unknown for the ICMN model with Poisson meeting process.
Second, the studies on delivery delay, which constitutes only a part of the fundamental end-to-end packet delay, can not be directly applied to investigate the inherent tradeoff between the end-to-end delay and  throughput in ICMNs.
Since the throughput capacity and  delay-throughput tradeoff in ICMNs indicate the ``best'' performance (i.e., theoretical  limits) that the network can stably support, it is expected that understanding these fundamental performance limits will provide profound insight to facilitate the design and optimization for these networks~\cite{Goldsmith2011}.

In this paper, we focus on the ICMNs with Poisson meeting process and study the throughput capacity and inherent delay-throughput tradeoff in such networks, where the proof techniques are inspired by the prior work of Neely and Modiano in~\cite{Neely2005}. 
The main difference between~\cite{Neely2005} and this work is the  network models under study.
The work of~\cite{Neely2005} focused on a time-slotted and cell-partitioned network model where the network nodes there move following an i.i.d. mobility model.
We study in this paper a time continuous ICMN model with Poisson meeting process, which is known to serve as a good approximation to a more general and important class of mobility models~\cite{Groenevelt2005, Groenevelt2005a} and hence has been widely adopted in the performance study for ICMNs.
%
The main contributions of the paper are summarized as follows.

\begin{itemize}
	\item For the concerned ICMN model with Poisson meeting process, we first derive an exact expression on its throughput capacity based on the pairwise meeting rate between network nodes there. 
	The analysis on the expected end-to-end packet delay under one capacity achieving routing algorithm is also provided. 	
	\item We then explore the inherent tradeoff between the expected end-to-end packet delay and  throughput  and establish a necessary condition for such tradeoff that holds under any routing algorithm in the concerned ICMNs.
	\item Case studies for typical random waypoint and random direction mobility model are further conducted to illustrate the applicability of our  theoretical results on the throughput capacity and delay-throughput tradeoff developed in this paper.
	\item Finally, we provide simulation/numerical results to  verify the efficiency of our theoretical capacity/delay results and to illustrate our  findings.
\end{itemize}

The rest of the paper is outlined as follows.
The related work is introduced in Section~\ref{sec:related_work}.
Section~\ref{sec:model} presents system models and some basic definitions.
The main theoretical results on throughput capacity and delay-throughput tradeoff are derived in Section~\ref{sec:capacity}.
Section~\ref{sec:numerical} provides simulation/numerical results and corresponding discussions.
Finally, we conclude this paper in Section~\ref{sec:conclusion}.

%
%
%
%

\section{Related Works}\label{sec:related_work}



Since the seminal work of Grossglauser and Tse~\cite{Grossglauser2002}, the throughput capacity and  delay-throughput tradeoff have been extensively studied for MANETs under various mobility models, most of which focused on deriving  order-sense results and scaling laws, i.e., to find asymptotic bounds $\Theta(f(n))$ for throughput capacity as a function of number of network nodes $n$\footnote{In this paper, for two functions $f(n)$ and $g(n)$, we denote $f(n)=O(g(n))$ iff there exist positive constants $c$ and $n_0$, such that for all $n \geq n_0$, the inequality $0 \leq f(n) \leq c g(n)$ is satisfied; $f(n) = \Omega(g(n))$  iff $g(n) = O(f(n))$; $f(n)=\Theta(g(n))$ iff both $f(n)=O(g(n))$ and $f(n)=\Omega(g(n))$ are satisfied.}.
The result of~\cite{Grossglauser2002} indicates that the long-term per flow throughput can be kept constant even as $n$ tends to infinity.
Gamal et al.~\cite{Gamal2006a, Gamal2006} studied a cell-partitioned MANET divided evenly into $n \times n$ cells, on which the nodes move independently according to a symmetric random walk.
For the considered MANET, the authors of~\cite{Gamal2006a, Gamal2006} investigated its optimal scaling behavior of the delay-throughput tradeoff and discovered that the $\Theta (1)$ per flow throughput is achievable  at the cost of an average delay of order $\Theta(n \log{n})$.
A similar delay-throughput tradeoff was shown to also exist in MANETs under restricted mobility model~\cite{Mammen2007}.
In the work of~\cite{Li2012a}, Li et al. proposed a controllable mobility model for cell-partitioned MANETs and 
derived upper and lower bounds on  the achievable throughput and expected  delay for the considered networks.
Besides, the scaling laws of the throughput capacity and related delay-throughput tradeoff have also been explored under other mobility models, such as  Brownian mobility model~\cite{Gamal2004Proc.IEEEINFOCOM,Lin2006},   hybrid mobility model~\cite{Sharma2007}, correlated mobility model~\cite{ciullo2011impact} and ballistic mobility model~\cite{Bogo2013Proc.ACMMOBICOM}. 
For a survey on the scaling law results of throughput capacity and delay in wireless networks, please refer to~\cite{Lu2013}.

It is notable that although the study on order sense results and scaling laws can help us to understand the asymptotic behavior of the throughput capacity and delay-throughput tradeoff as the number of network nodes increases, they provide little information on the actually achievable throughput/delay performance of these networks, which is of more interest from the view of network designers.
Noting the limitation of scaling law results, some preliminary work has been conducted for the exact expressions of throughput capacity of MANETs~\cite{Neely2005,Urgaonkar2011,JuntaoGao2013,Chen2013Proc.IEEEICCC}.
In particular, Neely and Modiano~\cite{Neely2005} computed the exact throughput capacity and delay-throughput tradeoff in a cell-partitioned MANET under an i.i.d. mobility model where the  locations of each network node in steady-state are independently and uniformly distributed over all cells.
Following the model of~\cite{Neely2005}, Urgaonkar and Neely further investigated the relation between throughput capacity and energy consumption in~\cite{Urgaonkar2011}.
Recently, Chen et. al~\cite{Chen2013Proc.IEEEICCC} studied the exact throughput capacity for a continuous MANET with the i.i.d. mobility model and an ALOHA  protocol for medium access control.

Despite the insight provided by existing exact results on the throughput capacity, the results developed there largely rely on an independent and uniform distribution of the locations of network nodes  in steady-state and hence are only applicable to networks under the i.i.d. mobility model.
This paper studies the exact throughput capacity and related delay-throughput tradeoff under a more widely accepted ICMN model and the result developed in this analysis can be applied to ICMNs under a general class of mobility models that can approximately fit the Poisson meeting process, irrespective of the stationary distribution of the locations of network nodes.

\section{System Models and Definitions}\label{sec:model}
In this section, we first introduce the network model, mobility model and traffic model, and then define the performance metrics involved in this study.
\subsection{Network Model}

We consider a  sparse network that consists of $n$ identical mobile nodes randomly moving within a continuous square of side-length $L$.
Each node has a maximum transmission distance $d$.
We call that two nodes ``meet'' when their distance is less than $d$ and thus they can conduct communication.
At the beginning of each meeting, either of the two nodes is randomly selected  as the transmitter of this meeting with equal probability.
Since the network is very sparse, we assume that the effect of interference is negligible.
The total number of bits transmitted during a meeting is fixed and normalized to one packet.

\subsection{Mobility Model}

We consider a general  model introduced in~\cite{Groenevelt2005} for node mobility.
Under this mobility model, the meeting process between each pair of nodes can be modeled as mutually independent and  homogeneous Poisson processes with rate $\beta > 0$.
Equivalently stated, the pairwise inter-meeting times, i.e., the time that elapses between two consecutive meetings of a given pair of nodes,  are mutually independent and exponentially distributed with  mean $1/\beta$.
It has been demonstrated in previous studies that this mobility model serves as good approximations to  a lot of typical mobility models like random waypoint, random direction and random walk models~\cite{Groenevelt2005,AlHanbali2008,Zhang2007}.
Specifically, the result of~\cite{Groenevelt2005} shows that for ICMNs with the  random waypoint  (RW) and the random direction (RD) models, the corresponding pairwise meeting rates $\beta_\text{RW}$ and $\beta_\text{RD}$ can be efficiently approximated as
%
%
\begin{equation}\label{eqn:beta_estimate}
	\beta_{\text{RW}} \approx \frac{ 2 c_1 \, d \E[V^*]}{L^2},~~\mbox{and}~~
	\beta_\text{RD} \approx \frac{2 d \E[V^*]}{L^2},
\end{equation}
respectively, where $c_1 = 1.3683$ is a constant and  $\E[V^*]$ is the average relative speed between two nodes (see~\cite{Groenevelt2005a} for the numerical calculation of $\E[V^*]$). In the special case that each node travels at a constant speed  $v$, we have $\beta_{\text{RW}} \approx \frac{8 c_1 d v}{\pi L^2}$ and $\beta_\text{RD} \approx \frac{8 d v}{L^2} $.


\subsection{Traffic Model}

Regarding traffic pattern, we consider the permutation traffic model~\cite{ciullo2011impact}.
Under this model, there are $n$ unicast traffic flows in the network and each node is the source of one traffic flow and also the destination of another traffic flow. 
Let $\varphi(i) \neq i$ denote the destination node of the traffic flow originated from node $i$, $i = 1,2,\ldots  n$, the source-destination pairs are matched at random in the sense that the sequence $(\varphi(1), \varphi(2),\ldots \varphi(n))$ is just a  permutation of the set of nodes $\{1, 2, \ldots n\}$.
The packet arrival process at each node is assumed to be a Poisson arrival process  with rate $\lambda > 0$.
For throughput capacity analysis, we consider that there is no constraint on packet life time and the buffer size in each node is sufficiently large such that packet loss due to buffer overflow will never happen.

\subsection{Performance Metrics}\label{subsec:definition}
The performance metrics involved in this study are defined as follows.

\textbf{End-to-end packet delay}: The \emph{end-to-end delay} of a packet is the time it takes for the packet to reach its destination after it arrives at its source.
 
\textbf{Network stability}: For an ICMN under a routing algorithm, if the packet arrival rate  to each node is $\lambda$, the network is called \emph{stable} under this rate if the average number of packets waiting at each node, i.e., the average queue length, does not grow to infinity with time and thus the average end-to-end packet delay is bounded.

\textbf{Throughput capacity}: The \emph{throughput capacity} $\mu$ of the concerned ICMN is defined as the maximum value of packet arrival rate $ \lambda$  that the network can stably support over any possible routing algorithm.


\section{Throughput Capacity and Delay-Throughput Tradeoff}\label{sec:capacity}
In this section, we first establish a theorem regarding the throughput capacity result in the considered ICMN based on the pairwise meeting rate therein, and provide necessity and sufficiency proofs for this theorem.
Then, we proceed to explore the tradeoff between the end-to-end delay  and throughput.
Finally,  specific case studies are further conducted  for ICMNs under the random waypoint and random direction mobility models.

\subsection{Throughput Capacity}
\begin{theorem}\label{theorem:capacity}
For the concerned ICMN with $n$ mobile nodes and pairwise meeting rate $\beta$, its throughput capacity can be determined as 
\begin{equation}\label{eqn:capacity}
	\mu = \frac{n}{4} \beta.
\end{equation}
\end{theorem}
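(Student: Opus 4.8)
The plan is to establish the two matching bounds $\mu \le \frac{n}{4}\beta$ (necessity/converse) and $\mu \ge \frac{n}{4}\beta$ (sufficiency/achievability) separately, since together they pin down the capacity exactly.

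For necessity I would use a transmission-budget argument. First observe that with $\binom{n}{2}$ node pairs, each meeting at rate $\beta$ and carrying exactly one packet per meeting, the total rate of transmission opportunities in the network is $\binom{n}{2}\beta = \frac{n(n-1)}{2}\beta$. In any stable network the per-flow delivery rate must equal the arrival rate $\lambda$, so I would classify each delivered packet of flow $i$ as either \emph{directly delivered} (sent by its own source straight to $\varphi(i)$, costing one transmission) or \emph{relayed} (costing at least two transmissions, namely at least one source-to-relay and one relay-to-destination hop). Writing $\lambda_d^{(i)}$ and $\lambda_r^{(i)}$ for the corresponding rates, the aggregate load $\sum_i(\lambda_d^{(i)} + 2\lambda_r^{(i)})$ cannot exceed the budget $\frac{n(n-1)}{2}\beta$. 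The crucial second ingredient bounds direct delivery: flow $i$ can be delivered directly only when source $i$ meets $\varphi(i)$ \emph{and} is selected as transmitter, an event of rate $\frac{\beta}{2}$, so $\lambda_d^{(i)} \le \frac{\beta}{2}$. Combining the two inequalities and maximizing $\lambda = \lambda_d^{(i)} + \lambda_r^{(i)}$ gives $\lambda \le \frac{\beta}{4} + \frac{(n-1)\beta}{4} = \frac{n}{4}\beta$; it is exactly the single transmission saved per directly delivered packet that upgrades the naive bound $\frac{(n-1)}{4}\beta$ to $\frac{n}{4}\beta$.

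For sufficiency I would exhibit an explicit capacity-achieving routing algorithm and show the network is stable for every $\lambda < \frac{n}{4}\beta$. The natural candidate is a two-hop relay scheme with direct-delivery priority: when $i$ meets $j$ and $i$ is the transmitter, $i$ first delivers any carried packet destined for $j$ (covering both the source-to-destination and relay-to-destination cases), and otherwise hands a fresh source packet to $j$ to relay. A rate accounting under full load then allocates $n\cdot\frac{\beta}{2}$ opportunities to direct delivery and $2n(\lambda-\frac{\beta}{2})$ to the two relay hops, whose sum meets the budget precisely at $\lambda=\frac{n}{4}\beta$; hence for $\lambda<\frac{n}{4}\beta$ there is strict slack, and I would confirm that the arrival rate into each per-destination relay queue stays below its service rate, so all queues are stable with finite expected delay. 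This simultaneously singles out the algorithm whose delay is analyzed afterward.

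The main obstacle I anticipate is the sufficiency side: making the stability/rate-balance argument rigorous enough that the achievable rate matches the converse constant \emph{exactly}. One must verify that direct delivery can run at its full rate $\frac{\beta}{2}$ per flow while the remaining, randomly assigned transmission opportunities absorb all relay traffic without starving any queue, and that the coupling of each node's three roles (source, relay, destination) over a shared pool of meetings leaks no capacity. On the converse side the only delicate point is justifying that every non-direct delivery truly costs at least two transmissions even when duplication is permitted — which holds because the source hop and at least one relay hop are always distinct — and this should be stated carefully rather than assumed.
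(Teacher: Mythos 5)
Your converse argument is sound and is essentially the paper's proof of Lemma~\ref{lemma:necessity} recast in rate form: the total meeting rate $\binom{n}{2}\beta$ budgets all transmissions, every relayed packet costs at least two of them, and per-flow direct deliveries are capped at $\beta/2$ (the pairwise meeting rate times the transmitter-selection probability $1/2$); the paper makes the same three points rigorous with the counting processes $Y(T)$, $X_h(T)$, $Y_{sd}(T)$ and the law of large numbers, which is also the cleanest way to make your informal ``rates under an arbitrary algorithm'' precise.

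The genuine gap is on the sufficiency side, and it is exactly the obstacle you flag without resolving. Your algorithm gives strict priority to delivery over handoff, and your stability argument is a nominal rate accounting. But under a priority rule the queues at a node are coupled: node $i$'s source queue receives a service opportunity only at those meetings where $i$ holds no relayed packet destined for the node it meets, so the source queue's service process is not a Poisson process of fixed rate --- its intensity depends on the occupancy of the $n-2$ relay queues at $i$, which in turn depends on the load $\lambda$. Comparing per-queue nominal arrival and service rates therefore does not prove stability, and showing that the priority scheme attains exactly $\frac{n}{4}\beta$ (rather than something strictly smaller) would require heavier machinery such as Lyapunov drift or fluid-limit arguments, which your sketch does not supply. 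The paper avoids the coupling by design: in Algorithm~\ref{alg:routing}, when the transmitter meets a node other than its own destination it flips an unbiased coin, \emph{independently of all queue contents}, to choose between source-to-relay and relay-to-destination. This thins the Poisson meeting process into independent Poisson service processes, so the source queue is an exact $M/M/1$ queue with service rate $\beta/2 + (n-2)\beta/4 = \frac{n}{4}\beta$; by reversibility (Burke's theorem) its departures are Poisson, each relay queue is an $M/M/1$ queue with arrival rate $\lambda/n$ and service rate $\beta/4$, and every queue is stable for all $\lambda < \frac{n}{4}\beta$, yielding the exact delay $\E\{D\} = \frac{n-1}{\mu-\lambda}$ of Lemma~\ref{lemma:sufficiency}. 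Replacing your priority rule by this randomized rule --- accepting that the coin sometimes ``wastes'' a meeting on an empty relay queue --- is what closes your proof.
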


The proof of  Theorem~\ref{theorem:capacity} involves  proving that  $\lambda \leq \mu$ is necessary and $\lambda < \mu$ is sufficient to  ensure network stability.
We establish the necessity in Section~\ref{sec:necessity} by showing that $\mu$ is an upper bound on the  throughput under any possible routing algorithm in the considered ICMN.
Then, we prove the sufficiency in Section~\ref{sec:sufficiency}, where  a routing algorithm is presented and it is shown that the network is stable under this routing algorithm for any  rate $\lambda < \mu$.
The proof of Theorem~\ref{theorem:capacity} follows the techniques developed in~\cite{Neely2005}.

\subsubsection{Proof of Necessity}\label{sec:necessity}

\begin{lemma}\label{lemma:necessity}
For the concerned ICMN with $n$ mobile nodes and pairwise meeting rate $\beta$, its throughput under any possible routing  algorithm  is upper bounded by
	\begin{equation}
	\mu = \frac{n}{4} \beta.
\end{equation}
\end{lemma}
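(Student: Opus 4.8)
The plan is to treat the pairwise meetings as the only resource that carries packets, and to upper bound the throughput by a resource-accounting (cut-like) argument over a long time horizon $[0,T]$. Since each meeting transmits exactly one packet and each unordered pair meets at rate $\beta$, the expected number of meetings in $[0,T]$ is $\binom{n}{2}\beta T = \frac{n(n-1)}{2}\beta T$, so the number of transmissions available in $[0,T]$ is at most $\frac{n(n-1)}{2}\beta T$ in expectation. Invoking network stability, the rate at which packets of each flow are delivered must equal the arrival rate $\lambda$, so roughly $\lambda T$ packets per flow and $n\lambda T$ packets in total must be delivered in $[0,T]$. The bound will then follow by lower bounding the number of \emph{distinct} transmissions that these deliveries necessarily consume and comparing with the available budget.

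The heart of the argument is to classify every delivered packet by the identity of the node that performs its delivering transmission, i.e.\ the first transmission that brings a copy of the packet to its destination. First I would distinguish \emph{direct} deliveries, in which the deliverer is the source itself, from \emph{relayed} deliveries, in which the deliverer is some relay distinct from the source. A direct packet consumes at least one transmission. A relayed packet consumes at least two distinct transmissions: the delivering transmission performed by a relay, plus at least one earlier transmission by the source to inject a copy into the network (the source must transmit the packet at least once, and that transmission is necessarily different from the relay's delivering transmission). Because each meeting carries a single packet of a single flow, all these counted transmissions are disjoint across packets and across flows, so summing the per-packet lower bounds and comparing with the budget gives, with $D$ denoting the per-flow rate of direct deliveries, the inequality $n(2\lambda - D) \le \frac{n(n-1)}{2}\beta$, that is, $2\lambda - D \le \frac{n-1}{2}\beta$.

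The remaining ingredient, and the one that sharpens the constant from the loose $\frac{n-1}{2}\beta$ down to exactly $\frac{n}{4}\beta$, is a separate bound on the direct-delivery rate. A direct delivery of the flow from $s$ to $d = \varphi(s)$ can occur only during a meeting of the pair $(s,d)$, which happens at rate $\beta$, and only when $s$ is the randomly selected transmitter, which occurs with probability $\tfrac12$; hence $D \le \tfrac{\beta}{2}$. Substituting this into the previous inequality yields $2\lambda \le \frac{n-1}{2}\beta + \frac{\beta}{2} = \frac{n}{2}\beta$, i.e.\ $\lambda \le \frac{n}{4}\beta = \mu$, which is the claimed upper bound.

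I expect the main obstacle to be the rigorous passage from the finite-horizon counts to the steady-state rates: one must justify, from the definition of stability, that the delivery rate equals $\lambda$ (queues do not absorb packets indefinitely), control the expected number of meetings in $[0,T]$ and the direct-delivery count via the strong law for the underlying Poisson processes, and verify that the per-packet transmission counts are genuinely disjoint so that no meeting is charged twice across the direct, first-hop, and last-hop categories. Establishing the direct-delivery bound also requires care that the transmitter-selection coin is independent of the meeting process, so that the factor $\tfrac12$ applies cleanly; this independence is exactly what the model assumptions guarantee.
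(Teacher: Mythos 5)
Your proposal is correct and follows essentially the same route as the paper's proof: both charge each relayed packet at least two distinct transmissions and each direct packet one, bound the total transmission budget by the aggregate pair meeting rate $\binom{n}{2}\beta$, and separately bound the direct-delivery rate per flow by $\beta/2$ (pair meeting rate times the transmitter-selection probability), which combine to give $\lambda \le \frac{n}{4}\beta$. The only cosmetic difference is that you argue directly in rates while the paper works with finite-horizon counts $X_h(T)$, $Y(T)$, $Y_{sd}(T)$ and an $\epsilon$-slack stability condition before passing to the limit via the strong law — precisely the technical passage you flagged as the remaining work.
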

\begin{proof}
Consider any possible routing algorithm.
Let  $X_h(T)$ denote the total number of packets transferred through $h$ hops from their sources to  destinations in time interval $[0,T]$.
Notice that to ensure network stability, the sum of arrival rates of all traffic flows should be not greater than the sum of throughputs, since otherwise the amount of packets waiting in the network will grow to infinity as time evolves.
Formally, it is necessary that for any given $\epsilon > 0$, there must exist an arbitrarily large $T$ such that the following inequality  holds
\begin{equation}\label{eqn:stability}
	\lambda n - \epsilon \leq \frac{1}{T} \sum_{h=1}^{\infty} X_h(T),
\end{equation}
where $\lambda$ denotes the packet arrival rate at each node.

Notice the fact that during the time interval $[0,T]$, the total number of packet transmissions is lower bounded by $\sum_{h=1}^{\infty} h X_h(T)$ and upper bounded by the total number of meetings  between all node pairs during this time interval, denoted by $Y(T)$ in the following.
Thus,  we have from the transitivity that
\begin{equation}\label{eqn:tran_upp_bound}
	\sum_{h=1}^{\infty} h X_h(T) \leq Y(T).
\end{equation}
From~(\ref{eqn:stability}) and~(\ref{eqn:tran_upp_bound}), we have
\begin{align}
\frac{1}{T} Y(T) &\geq  \frac{1}{T} X_1 (T) +  \frac{2}{T} \sum_{h=2}^{\infty} X_h(T)\nonumber\\%
								 &\geq  \frac{1}{T} X_1 (T) + 2\left[(\lambda n - \epsilon) - \frac{1}{T} X_1 (T) \right],
\end{align}
and thus 
\begin{equation}\label{eqn:bound}
\lambda \leq \frac{1}{2n}\left [\frac{1}{T} Y(T) + \frac{1}{T} X_1 (T) + 2\epsilon \right].
\end{equation}

Since a packet can be transferred from its source to destination through single hop only when the source conducts a  transmission directly to the destination, the term $X_1 (T)$ in~(\ref{eqn:bound}), i.e., the number of packets transferred from source to destination within one hop during $[0,T]$,  is upper bounded by $Y_{sd}(T)$, i.e., the number of direct transmissions  from each source  node to its destination during the time interval $[0,T]$.
Notice that in the network there are $\binom{n}{2} = \frac{(n-1) n}{2}$ node-pairs and based on the property of the Poisson meeting process, the meeting rate of each pair of nodes is $\beta$.
It follows that the expectation of the number of transmissions occurring in the network is just equal to $\frac{(n-1) n}{2} \beta$.
Applying the law of large numbers, we have as $T \to \infty$
\begin{equation}
\frac{1}{T}Y(T)  \xrightarrow{\text{a.s.}} \frac{(n-1) n}{2} \beta   \label{eqn:ex_no_tr}.
\end{equation}
Similarly, the expectation of the number of transmissions conducted  from source nodes to their destination directly is equal to $\frac{n}{2}  \beta$, so as $T \to \infty$
\begin{equation}
\frac{1}{T}Y_{sd}(T)  \xrightarrow{\text{a.s.}} \frac{n}{2}  \beta \label{eqn:ex_no_sdtr}.
\end{equation}
Using~(\ref{eqn:ex_no_tr}) and~(\ref{eqn:ex_no_sdtr}) into~(\ref{eqn:bound}), it follows that

\begin{algorithm}[!t]
\caption{Routing Algorithm.}
\label{alg:routing}
\begin{algorithmic}[1]
\STATE Suppose that there is a meeting between two nodes, transmitter  \emph{Tx} and  receiver \emph{Rx}, respectively.
\IF {\emph{{Rx}} is the destination of the traffic generated from \emph{{Tx}}}
\STATE {\emph{{Tx}} conducts a \emph{source-to-destination} transmission:}
\IF{\emph{Tx} has packet(s) in its local queue}
\STATE {\emph{Tx} transmits the head-of-line packet of the queue
to \emph{Rx}.}
\ELSE
\STATE{\emph{Tx} remains idle.}
\ENDIF
\ELSE
\STATE {\emph{Tx} flips an unbiased coin;}
\IF{it is the head}
\STATE {\emph{Tx} conducts a \emph{source-to-relay} transmission:}
\IF{\emph{Tx} has packet(s) in its local queue }
\STATE {\emph{Tx} transmits the head-of-line packet of the queue to \emph{Rx}.}
\ELSE
\STATE{\emph{Tx} remains idle.}
\ENDIF
\ELSE
\STATE {\emph{Tx} conducts a \emph{relay-to-destination} transmission:}
\IF{\emph{Tx} has packet(s) in the relay queue destined for \emph{Rx}}
\STATE {\emph{Tx}  the head-of-line packet of the queue to \emph{Rx}.}
\ELSE
\STATE{\emph{Tx} remains idle.}
\ENDIF
\ENDIF
\ENDIF
\end{algorithmic}
\end{algorithm}
\begin{equation}
\lambda \leq \frac{n}{4} \beta + \frac{\epsilon}{n}, \text{  as } T \to \infty.
\end{equation}
Since  $\epsilon$ can be arbitrarily small, the result then follows.
\end{proof}

\subsubsection{Proof of Sufficiency}\label{sec:sufficiency}
For the proof of sufficiency, we  present a routing algorithm in Algorithm~\ref{alg:routing} and  derive the expected end-to-end packet delay in the considered ICMN under this routing algorithm in Lemma~\ref{lemma:sufficiency}.
To support the operation of Algorithm~\ref{alg:routing}, we assume that each node maintains one source queue to store packets locally generated  and  $n - 2$  relay queues to store packets of other flows (one queue per flow). All these queues follow the FIFO (first-in-first-out) discipline. The proof of Lemma~\ref{lemma:sufficiency} uses the reversibility of continuous time $M/M/1$ queues.


\begin{lemma}\label{lemma:sufficiency}
For the concerned ICMN with $n$ mobile nodes and  pairwise meeting rate $\beta$, if the packet arrival process at each node is an i.i.d. Poisson  process with rate $\lambda$ and Algorithm~\ref{alg:routing} is adopted for packet routing,   the corresponding expected end-to-end delay $\E\{D\}$ is determined as
\begin{equation}\label{eqn:delay}
	\E\{D\} = \frac{n-1}{\mu-\lambda},
\end{equation}
where $\mu$ is the upper bound determined in Lemma~\ref{lemma:necessity}, 
\end{lemma}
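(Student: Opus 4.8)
The plan is to decompose the end-to-end delay of a packet into the time it spends in its source queue plus, whenever it is relayed, the time it spends in one relay queue, and to show that each of these queues behaves as an independent $M/M/1$ system whose mean sojourn time is available in closed form.

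First I would analyze the source queue of a tagged node $i$. Its service epochs are exactly the meetings at which node $i$ is the transmitter and performs either a source-to-destination transmission (it meets $\varphi(i)$, meeting rate $\beta$, selected as Tx with probability $1/2$) or a source-to-relay transmission (it meets one of the $n-2$ eligible relays, aggregate meeting rate $(n-2)\beta$, thinned by the probability $1/4$ of being Tx and flipping head). Superposing these mutually independent Poisson streams yields a service-opportunity process that is Poisson with rate $\frac{\beta}{2}+\frac{(n-2)\beta}{4}=\frac{n\beta}{4}=\mu$. Since packet arrivals are Poisson with rate $\lambda$ and independent of the meeting process, the source queue is an $M/M/1$ queue with arrival rate $\lambda$ and service rate $\mu$; its expected sojourn time is $\frac{1}{\mu-\lambda}$, and it is stable precisely when $\lambda<\mu$.

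Next I would treat the relay queues. By the marking theorem for Poisson processes, each source service opportunity is independently labelled source-to-destination (probability $\frac{\beta/2}{\mu}=\frac{2}{n}$) or source-to-relay toward a specific relay $k$ (probability $\frac{\beta/4}{\mu}=\frac{1}{n}$ for each of the $n-2$ relays). The key step, which I expect to be the main obstacle, is to argue that the stream of packets actually deposited into a given relay queue is Poisson and independent of that queue's own service process. For this I would invoke the reversibility of the stationary $M/M/1$ source queue (Burke's theorem): its departure process is Poisson with rate $\lambda$ and the past departures are independent of the current queue occupancy, so splitting the departures by their independent labels produces independent Poisson streams, and the arrivals into relay queue $(k,\varphi(i))$ form a Poisson process of rate $\lambda/n$. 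Its service occurs when relay $k$ meets $\varphi(i)$, is selected as Tx, and flips tail; under the permutation traffic model $\varphi(i)$ is the destination of a unique flow and $\varphi(k)\neq\varphi(i)$ for every relay $k$, so this service is never pre-empted by a source-to-destination transmission and occurs at rate $\beta/4$. Because the arrivals are driven by the meetings of the pair $(i,k)$ while the service is driven by the meetings of the distinct pair $(k,\varphi(i))$, the two are independent, and the relay queue is an independent $M/M/1$ system with expected sojourn time $\frac{1}{\beta/4-\lambda/n}=\frac{n}{\mu-\lambda}$.

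Finally I would assemble the end-to-end delay. A packet leaving the source queue is delivered directly with probability $\frac{2}{n}$ and relayed with probability $\frac{n-2}{n}$, so
\begin{equation*}
\E\{D\}=\frac{1}{\mu-\lambda}+\frac{n-2}{n}\cdot\frac{n}{\mu-\lambda}=\frac{n-1}{\mu-\lambda},
\end{equation*}
which is the claimed expression. The same computation confirms that the single condition $\lambda<\mu$ stabilizes every source and relay queue simultaneously, thereby establishing the sufficiency direction of Theorem~\ref{theorem:capacity}.
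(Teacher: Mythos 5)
Your proposal is correct and takes essentially the same route as the paper's proof: a two-stage tandem analysis in which the source queue is an $M/M/1$ queue with service rate $\mu=\frac{n}{4}\beta$, Burke's theorem (reversibility) gives Poisson departures that split into rate-$\lambda/n$ streams toward the relays, and each relay queue is an $M/M/1$ queue with service rate $\beta/4$, yielding $\E\{D\}=\frac{1}{\mu-\lambda}+\frac{n-2}{n}\cdot\frac{n}{\mu-\lambda}=\frac{n-1}{\mu-\lambda}$. Your write-up is in fact somewhat more careful than the paper's on two points it leaves implicit: the independence of a relay queue's arrival stream (driven by meetings of the pair carrying the packet from the source) from its service stream (driven by meetings of the distinct relay--destination pair), and the fact that the permutation traffic pattern guarantees the relay-to-destination service is never displaced by a source-to-destination transmission.
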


\begin{proof}
Notice that under Algorithm~\ref{alg:routing}, there are three types of transmissions, i.e., source-to-destination transmission, source-to-relay transmission and relay-to-destination transmission.
It takes a packet at most two hops to reach its destination and the packet delivery processes of the $n$ traffic flows are independent from each other.
Based on the properties of the mobility model and Algorithm~\ref{alg:routing}, we can see that the packet delivery process in the considered ICMN under Algorithm~\ref{alg:routing} consists of $n$ identical queuing processes (one queuing process per flow). 
Without loss of generality, we focus on in the analysis the queuing process of an arbitrary traffic flow illustrated in Fig.~\ref{fig:routing_process}.
It can be seen from Fig.~\ref{fig:routing_process} that packets  of this flow experience a  two-stage queuing process if the packet is not directly transmitted to the destination, i.e., the queuing process at the source node (first stage) and the queuing process at one of the $n-2$ relay nodes (second stage).

\begin{figure}[!t]
	\centering
		\includegraphics[width=3.0in]{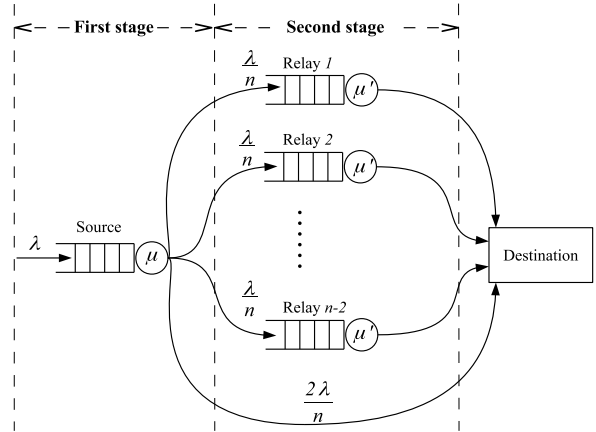}
	\caption{Two-stage queuing process under Algorithm~\ref{alg:routing}. In the figure, the inter-service times in the source node and relay nodes are exponentially distributed with rate $\mu = \frac{n}{4}\beta$ and rate $\mu' = \frac{\beta}{4}$, respectively.}
	\label{fig:routing_process}
\end{figure}

Consider first the source queue.
The input to this queue is a Poisson arrival process with rate $\lambda$.
According to Algorithm~\ref{alg:routing}, a ``service'' comes  when  the source node conducts either a \emph{source-to-destination} transmission or a  \emph{source-to-relay} transmission.
Based on the property of Poisson meeting process and Algorithm~\ref{alg:routing}, the service process is a Poisson process with service rate equal to
\begin{align}
\mu &= \beta /2 + \beta (n-2) /4 \label{eqn:service_rate}\\
	&= \frac{n}{4}\beta,
\end{align}
where the first term in~(\ref{eqn:service_rate}) is the rate associated with the particular source meeting its destination and multiplied by $1/2$ for the probability that the source is chosen to transmit,  and the second term is the rate of this source meeting any one of the $n-2$ relay nodes and  multiplied by the $1/4$ for the probability that the source is chosen to transmit and the source-to-relay transmission is selected.
%
%
Then, it follows that the source queue is an $M/M/1$ queue with input rate $\lambda$ and service rate $\mu$.
Based on the result from queuing theory, the mean queuing delay of the source queue  $\E\{D_s\}$ is given by
\begin{equation}
	\E\{D_s\} = \frac{1}{\mu-\lambda}.
\end{equation}
Moreover, since $M/M/1$ queues are reversible, so the departure process from the source queue is also a Poisson process with rate $\lambda$~\cite{kelly2011reversibility}.

Consider now the queuing process at one of the $n-2$ relay nodes.
Notice that with probability $\frac{1}{n}$ a packet departure from the source node will enter this relay node, so the input to this relay queue is a Poisson process with rate $\frac{\lambda}{n}$.
In this relay queue, a ``service'' arises when this relay node conducts a \emph{relay-to-destination} transmission to the destination node of the concerned traffic flow, so the service process of the relay nodes is a Poisson process with rate $\mu' = \frac{\beta}{4}$.
We can see that the relay queue is again an $M/M/1$ queue.
The mean queuing delay $\E\{D_r\}$ at a relay node is given by
\begin{equation}
	\E\{D_r\} = \frac{1}{\mu'-\lambda/n}.
\end{equation}

Summing up the above results, we have that the expected end-to-end packet delay is 
\begin{equation}
	\E\{D\} = \E\{D_s\} + \frac{n-2}{n} \E\{D_r\} = \frac{n-1}{\mu-\lambda},
\end{equation}
which proves the lemma.
\end{proof}


\subsection{Delay-Throughput Tradeoff}\label{sec:tradeoff}
In the following theorem, we establish a necessary condition on the tradeoff between the end-to-end packet delay and achievable throughput  under any routing algorithm  that stabilizes the network. The proof follows the  technique developed in~\cite{Neely2005}.

\begin{theorem}\label{theorem:tradeoff}
Consider an ICMN with $n$ mobile nodes and pairwise meeting rate $\beta$ and  the packet arrival rate at each node is $\lambda$.
A necessary condition for any routing algorithm that can stabilize the network with rate $\lambda$ while maintaining a bounded expected end-to-end delay $\E\{D\}$ is given by
\begin{equation}
	\frac{\E\{D\}}{\lambda} \geq \frac{1- \log(2) }{2 (n-1) \beta^2}.
\end{equation}
\end{theorem}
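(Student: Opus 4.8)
The plan is to adapt the redundancy-versus-delay argument of~\cite{Neely2005} to the continuous Poisson-meeting setting: lower-bound the end-to-end delay of a packet in terms of the number of copies it is allowed to spawn, and upper-bound that number of copies through the same meeting budget already exploited in the necessity proof of Lemma~\ref{lemma:necessity}.

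First I would fix an arbitrary routing algorithm that stabilizes the network and define, for a tagged delivered packet, the random number $R$ of transmissions it consumes (equivalently, an upper bound on the number of relay copies it ever creates). Counting over $[0,T]$ exactly as in Lemma~\ref{lemma:necessity}, the total number of transmissions is at most the number of meetings $Y(T)$ with $\tfrac{1}{T}Y(T)\xrightarrow{\text{a.s.}}\tfrac{(n-1)n}{2}\beta$, while the number of delivered packets grows like $\lambda n T$; hence the time-average redundancy per delivered packet satisfies $\E\{R\}\le \frac{(n-1)\beta}{2\lambda}$. This is the continuous analogue of the scheduling constraint in~\cite{Neely2005} and ties the admissible redundancy directly to $\lambda$ and $\beta$.

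Next I would lower-bound the delay conditioned on $R$. The key structural fact is that a packet can reach its destination only when some current copy-holder meets that destination \emph{and} is selected as the transmitter; by the model's equal-probability transmitter rule this event occurs at instantaneous rate at most $\tfrac{\beta}{2}$ per copy, hence at most $\tfrac{\beta}{2}R$ in total. Treating the delivery time as the first point of a process whose hazard rate is bounded by $\tfrac{\beta}{2}R$ yields the survival bound $\Pr\{D>t\mid R\}\ge e^{-\frac{\beta}{2}Rt}$, and integrating gives a conditional bound of the form $\E\{D\mid R\}\ge \frac{c}{\beta R}$. Keeping the transient in which the copy count climbs from $1$ up to $R$, rather than pretending all copies are present from the outset, is precisely what replaces the naive constant by the stated $1-\log 2$, forcing $c=\frac{1-\log 2}{4}$; this integral is the step I expect to be the most delicate.

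Finally I would remove the conditioning. Since $x\mapsto 1/x$ is convex, Jensen's inequality gives $\E\{1/R\}\ge 1/\E\{R\}\ge \frac{2\lambda}{(n-1)\beta}$, so that $\E\{D\}\ge \frac{c}{\beta}\,\E\{1/R\}\ge \frac{2c\lambda}{(n-1)\beta^{2}}$, which upon inserting $c=\frac{1-\log 2}{4}$ and dividing by $\lambda$ is exactly the claimed tradeoff. The main obstacle, beyond the transient integral, is making the hazard-rate bound rigorous for an arbitrary—possibly adaptive and redundancy-based—policy: one must argue that no scheduler can lift the per-copy delivery rate above $\tfrac{\beta}{2}$, and that the random redundancies of distinct packets may legitimately be aggregated against the single shared meeting budget, so that the convexity step and the almost-sure limits above can be invoked simultaneously.
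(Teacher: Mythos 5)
Your overall skeleton is the same as the paper's (it is the Neely--Modiano redundancy argument adapted to Poisson meetings): bound the mean redundancy by the meeting budget, $\E\{R\}\le\frac{(n-1)\beta}{2\lambda}$; lower-bound the per-packet delay by roughly $\frac{1-\log 2}{4\beta\,\E\{R\}}$; and close with Jensen's inequality. The redundancy-budget step and the Jensen step are both fine and match the paper. The gap is exactly in the middle step you yourself flag as delicate, and your proposed resolution does not work. The conditional survival bound $\Pr\{D>t\mid R\}\ge e^{-\frac{\beta}{2}Rt}$ is false for adaptive policies, because $R$ is correlated with the meeting process: an algorithm may, for instance, stop replicating the moment a copy-holder meets the destination, so that small values of $R$ are correlated with atypically \emph{early} meetings, and the conditional delay given $R$ can be far below any per-copy hazard bound (e.g., conditioned on $R=1$ under flooding-until-delivery, the delay is on the order of a first-meeting time among all $n-1$ peers, not an $\mathrm{Exp}(\beta/2)$ variable). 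This is precisely the pitfall the paper calls out: conditioning on a redundancy event ``may skew the memoryless property of the Poisson meeting process.''

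The paper's fix is different from your transient idea. It first applies Markov's inequality, $\Pr\{R_i\le 2\E\{R_i\}\}\ge\frac{1}{2}$, to write $\E\{D_i\}\ge\frac{1}{2}\E\{D_i\mid R_i\le 2\E\{R_i\}\}$, then compares with a virtual network in which $2\E\{R_i\}$ copies exist from time zero (delivery time $D_i^*\sim\mathrm{Exp}(2\E\{R_i\}\beta)$), and crucially replaces the naive bound by $\E\{D_i\mid R_i\le 2\E\{R_i\}\}\ge\inf_{\Theta}\E\{D_i^*\mid\Theta\}$, the infimum over \emph{all} events $\Theta$ of probability at least $1/2$. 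That infimum is attained by conditioning the exponential on lying below its median, and $\E\{X\mid X\le\mathrm{median}\}=\frac{1-\log 2}{\text{rate}}$ for an exponential $X$: this, together with the Markov factor $\frac{1}{2}$, is the true origin of the constant $\frac{1-\log 2}{4}$. Your attribution of $1-\log 2$ to ``keeping the transient in which the copy count climbs from $1$ up to $R$'' is not only unsubstantiated but directionally backwards: a transient with fewer copies early can only \emph{slow} delivery and hence \emph{increase} the conditional-delay constant above the naive value $2$, whereas $\frac{1-\log 2}{4}\approx 0.077$ is a \emph{degradation} of the bound, needed precisely to absorb the adversarial correlation between redundancy and mobility that your hazard-rate step ignores. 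Without an argument of the paper's worst-case-conditioning type, your proof does not go through.
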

\begin{proof}
Consider that the packet arrival rate to each of the $n$ traffic flows is $\lambda$ and that there is a general routing algorithm that stabilizes the network under this rate and results in an expected end-to-end delay of $\E\{D\}$.

Let random variable $D_i$ denote the end-to-end delay of a packet in flow $i$ under the routing algorithm and $\E\{D_i\}$ represent its expectation, then the expected end-to-end packet delay of the  network $\E\{D\}$ can be calculated by
\begin{equation}\label{eqn:total_delay_ICN}
	\E\{D\} = \frac{1}{n} \sum_{i=1}^{n} \E\{D_i\}.
\end{equation}
Let random variable $R_i$ denote the redundancy of a packet in flow $i$, i.e., this packet is distributed into $R_i$ different nodes (including the destination) in the network, and $\E\{R_i\}$ be its expectation.
Notice that the sum of the generating rates of  packet redundancy in the network  is
\begin{equation}
 \lambda n \cdot \frac{1}{n} \sum_{i=1}^{n} \E\{R_i\}	= \lambda \sum_{i}^{n} \E\{R_i\}.
\end{equation} 
This quantity is upper bounded by the sum of pairwise meeting rates  in the network, due to the fact  that during each meeting  at most one copy of a packet is transmitted from one node to another.
Formally, it is expressed as 
\begin{equation}\label{eqn:rate_redun_ICN}
	\lambda \sum_{i=1}^{n} \E\{R_i\} \leq \binom{n}{2} \beta = \frac{(n-1)n}{2} \beta.
\end{equation}

For traffic flow $i$, its expected end-to-end delay $\E\{D_i\}$ satisfies the following inequality
\begin{align}
	\E\{D_i\} &= \E  \left\{ D_i | R_i \leq 2 \E  \left\{ R_i \right \} \right\} \Pr \left \{R_i \leq 2 \E\{R_i\}   \right \} \nonumber\\
	&~~~+\E \{D_i | R_i > 2 \E\{R_i\} \} \Pr \{R_i > 2 \E\{R_i\}  \} \nonumber\\
	&\geq \E \{D_i | R_i \leq 2 \E\{R_i\} \} \Pr \{R_i \leq 2 \E\{R_i\}  \} \nonumber\\
	&\geq  \frac{1}{2} \E \{D_i | R_i \leq 2 \E\{R_i\} \},\label{eqn:one_half}
\end{align}
where~(\ref{eqn:one_half}) is due to that $\Pr \{R_i \leq 2 \E\{R_i\}  \} \geq  \frac{1}{2}$ holds for any non-negative random variable.
Now, we consider a virtual network where there are $n$ nodes and $ 2 \E\{R_i\} $ of them initially possess a copy of a packet destined for some other node. 
Let  $D_i^*$ denote the time elapsed from the initial moment until the  moment that one of the $ 2 \E\{R_i\} $ nodes  meets the destination node of the packet,
then  $D_i^*$ is exponentially distributed with parameter $2 \E\{R_i\} \beta$, so that  $\E\{D_i^*\} = \frac{1}{2 \E\{R_i\} \beta}$.

Notice that $\E \{D_i | R_i \leq 2 \E\{R_i\} \}$ is not necessarily lower bounded by $\E\{D_i^*\}$, because the redundancy  $R_i$ may be correlated with certain events in the mobility process, so conditioning on the event $\{R_i \leq 2 \E\{R_i\} \}$ may skew the memoryless property of the Poisson meeting process. However, since $\Pr \{R_i \leq 2 \E\{R_i\}  \} \geq  \frac{1}{2}$, we have the following bound:
\begin{equation}\label{eqn:mini_ineq}
	\E \{D_i | R_i \leq 2 \E\{R_i\} \} \geq \underset{\Theta}{\inf}  \E\{D_i^* | \Theta \},
\end{equation}
where the left-side conditional expectation is minimized over all possible events $\Theta$ that occurs with probability greater than or equal to $1/2$. 
The inequality holds because the event yielding the mobility patterns of the type encountered when $\{R_i \leq 2 \E\{R_i\} \}$ is also included in the events set, over which the conditional expectation is minimized.

Notice that since $D_i^*$ is a continuous variable, so the event minimizing the conditional expectation in~(\ref{eqn:mini_ineq}) is just  $\{ D_i^* \leq \omega \}$ such that $\omega$ is the  smallest value satisfying  $\Pr\{ D_i^* \leq \omega \} = \frac{1}{2}$.
Since $D_i^*$  is exponentially distributed with rate $2 \E\{R_i\} \beta$, so  $\omega = \frac{\log (2)}{2 \E\{R_i\} \beta}$ and $\underset{\Theta}{\inf}  \E\{D_i^* | \Theta \}$ is determined as 
\begin{align}
	\underset{\Theta}{\inf}  \E\{D_i^* | \Theta \} &= \E\{D_i^* | D_i^* \leq \omega \}\nonumber\\
	&= \frac{\E\{D_i^*\} -\E\{D_i^* | D_i^* > \omega \}\Pr\{ D_i^* > \omega \} }{\Pr\{ D_i^* \leq \omega \}}\nonumber\\
	&=\frac{\frac{1}{2 \E\{R_i\} \beta} - \frac{1}{2}(\omega + \frac{1}{2 \E\{R_i\} \beta})}{1/2}\nonumber\\
	&= \frac{1-\log (2)}{{2 \E\{R_i\} \beta}}.\label{eqn:mini_R_star}
\end{align}

Substituting~(\ref{eqn:mini_R_star}),~(\ref{eqn:mini_ineq}) and~(\ref{eqn:one_half})  into~(\ref{eqn:total_delay_ICN}) leads to
\begin{align}
	\E\{D\} 
	&\geq \frac{1- \log(2)}{4 \beta} \cdot \frac{1}{n} \sum_{i=1}^{n}  \frac{1}{\E\{ R_i \}}\label{eqn:jen1_ICN}\\
	&\geq  \frac{1- \log(2)}{4 \beta} \cdot \frac{1}{\frac{1}{n} \sum_{i=1}^{n}   {\E\{ R_i \}}},\label{eqn:jen2_ICN}
\end{align}
where~(\ref{eqn:jen2_ICN}) results  from Jensen's inequality, since the function $f(x) = 1/x$ is convex for $x>0$.
Combining ~(\ref{eqn:rate_redun_ICN}) and~(\ref{eqn:jen2_ICN}), we have
\begin{align}
	\E\{D\} &\geq    \frac{1- \log(2)}{4 \beta} \cdot \frac{2 \lambda}{(n-1)\beta} =\frac{1- \log(2) }{2 (n-1) \beta^2} \cdot \lambda .\label{eqn:delay_bound_ICN}
\end{align}
Multiplying $1/\lambda$ on both sides of (\ref{eqn:delay_bound_ICN}) proves the theorem.
\end{proof}

\subsection{Case Studies under Random Waypoint and Random Direction Models}

So far, we have derived the throughput capacity and delay-throughput tradeoff for the concerned ICMNs with Poisson meeting process.
To illustrate the applicability of these theoretical results, we also do case studies for the random waypoint and random direction mobility models, where parameter-matching is conducted on these model to fit the studied Poisson meeting process. It will be demonstrated in Section~\ref{sec:numerical} via simulation that the  results derived here can serve as good approximations for networks under these mobility models.



\emph{Throughput Capacity:}
For an ICMN with $n$ mobile nodes, side-length $L$ and maximum transmission distance $d$,  when $d \ll L$, 
the throughput capacities  $\mu_\text{RW}$ under the  random waypoint  model and $\mu_\text{RD}$ under the random direction   model  can be efficiently approximated as
\begin{equation}\label{eqn:appr_capa}
	\mu_{\text{RW}} \approx \frac{ c_1 n d   \E[V^*]}{ 2 L^2}~~\mbox{and},~~
	\mu_\text{RD} \approx \frac{ n d   \E[V^*]}{2 L^2},
\end{equation}
respectively, where $c_1 = 1.3683$ is a constant and  $\E[V^*]$ is the average relative speed between a pair of nodes. In the special case of constant traveling speed  $v$, we have $\mu_{\text{RW}} \approx \frac{2 c_1 n d v}{ \pi L^2}$ and $\mu_\text{RD} \approx \frac{2 n d  v}{L^2} $, respectively.

\emph{Delay-throughput tradeoff:}
For an ICMN with $n$ mobile nodes, side-length $L$ and maximum transmission distance $d$,
when $d \ll L$, a necessary condition for any routing algorithm that can stabilize the network with packet arrival rate $\lambda$ while maintaining a bounded expected end-to-end delay $\E\{D\}$ is given by
\begin{enumerate}
	\item for the random waypoint mobility model:
\begin{equation}
	\frac{\E\{D\}}{\lambda} \geq \frac{  (1-\log (2) )L^4}{8(n-1) (c_1 d  \E [V^*])^2   },
\end{equation}
  \item for the random direction mobility model:
\begin{equation}
	\frac{\E\{D\}}{\lambda} \geq \frac{   (1-\log (2) ) L^4}{8(n-1) (  d  \E [V^*])^2   },  
\end{equation} 
\end{enumerate}
where $c_1 = 1.3683$ is a constant and  $\E[V^*]$ is the average relative speed between a pair of nodes. In the special case of constant traveling speed  $v$, the necessary condition is given by

\begin{enumerate}
	\item for the random waypoint mobility model:
\begin{equation}
	\frac{\E\{D\}}{\lambda} \geq \frac{  (1-\log (2) ) \pi^2 L^4}{128 (n-1) (c_1 d  v)^2   },\label{eqn:tradeoff_rw}
\end{equation}
 \item for the random direction mobility model:
\begin{equation}
	\frac{\E\{D\}}{\lambda} \geq \frac{ (1-\log (2) ) L^4}{128 (n-1) ( d  v)^2   }.\label{eqn:tradeoff_rd}
\end{equation} 
\end{enumerate}

\begin{remark}\label{remark:corollary_1}
Notice that for both the random waypoint and random direction mobility models, if we consider that the   $L$ and $n$ increase while  the node density $\tau = n/L^2 $ remains constant, then we have the following observations:
\begin{itemize}
	\item The results of~(\ref{eqn:appr_capa}) reduce to $\mu_{\text{RW}} \approx  c_1 \tau d   \E[V^*]$ and $\mu_{\text{RW}} \approx   \tau d   \E[V^*]$, indicating that a constant throughput capacity is still achievable in  a large scale ICMN.
Meanwhile, the result in~(\ref{eqn:delay}) indicates that the average end-to-end delay under Algorithm~\ref{alg:routing} will increase linearly with the number of nodes $n$.
	\item The results in~(\ref{eqn:tradeoff_rw}) and~(\ref{eqn:tradeoff_rd})  indicate that the delay-throughput scales as ${\E\{D\}} / {\lambda} > O(n)$.
\end{itemize}

\end{remark}

\section{Simulation and Numerical Results}\label{sec:numerical}
In this section, we first provide simulation results to validate the efficiency of the theoretical results developed in Section~\ref{sec:capacity}, and then apply these results to illustrate the performance of the concerned ICMNs under different settings of system parameters.
\subsection{Model Validation}
To validate the efficiency of our analytical results, we provide simulation results under the random waypoint and  the random direction mobility models in this section.
The simulation results were obtained from a self-developed  discrete event simulator that implements the packet delivery process under Algorithm~\ref{alg:routing} and accepts  mobility traces generated by the NS-$2$ code of the random waypoint and random direction mobility models as input.
\subsubsection{Mobility Models}\label{sec:numerical:mobility}
The mobility models considered in the simulation are summarized as follows.
\begin{itemize}

\item  Random waypoint mobility model~\cite{Groenevelt2005}:
Under this model, initially network nodes are uniformly distributed in the network area and each node travels at a  travel speed randomly and uniformly selected in $(v_\text{min}, v_\text{max})$ with $v_\text{min} > 0$ towards a destination randomly and uniformly selected in the network area. 
After arriving at the destination, the node may pause for a random amount of time and then chooses a new destination and a new travel speed, independently of previous ones.
It is notable that the  locations of the nodes in steady-state under the random waypoint model are not uniformly distributed.
Particularly, it was reported in~\cite{Bettstetter2002Proc.WMAN} that the stationary distribution of the location of a node  is more concentrated near the center of the network region.

	\item  Random direction mobility model~\cite{Groenevelt2005}: Under this mobility model, initially network nodes are uniformly distributed in the network area and each node randomly selects a direction, a speed and a finite traveling time.
The node travels towards the direction at the given speed for the given duration of time. 
When the travel time duration has expired,  the node could pause for a random time, after which it selects a new set of direction, speed and time duration, independently of all previous ones.
When the node reaches a boundary, it is either reflected (i.e., it is bounced back to the network
area with the angle of $\theta$ or $\pi - \theta$) or the area wraps around so that it appears on the other side.
It was shown in~\cite{Nain2005Proc.IEEEINFOCOM} that the stationary distribution of locations is uniformly distributed for arbitrary distributions of direction, speed and travel time duration, irrespective of the boundaries being reflecting or wrapped around.

\end{itemize}
\subsubsection{Simulation Setting}
In our simulation, we consider a square network  of side-length $L = 2000$ m and  number of nodes $n = 20$.
The travel speed is constant and equals to $v = 40$ m$/$s.
There is no pause time.
We consider transmission distances of $ d = \{20,50,100\}$, where according to~(\ref{eqn:beta_estimate}) the corresponding pairwise meeting rates are determined as $\beta_\text{RW} = \{6.96 \times 10^{-4},1.74\times 10^{-3}, 3.48 \times 10^{-3} \}$ for the random waypoint mobility model and $\beta_\text{RD} = \{5.09 \times 10^{-4},1.27\times 10^{-3}, 2.55 \times 10^{-3} \}$ for the random direction mobility model. 
For the simulation measurements of the throughput and average end-to-end delay under Algorithm~\ref{alg:routing}, we focus on a specific traffic flow and measure its throughput and average packet delay over a long time period
of $1.0 \times 10^7$ seconds for each system load $\rho = \lambda / \mu$.

\subsubsection{Simulation Results}
\begin{figure}
	\centering
	\subfloat[Random waypoint model.]{\includegraphics[width=3.0in]{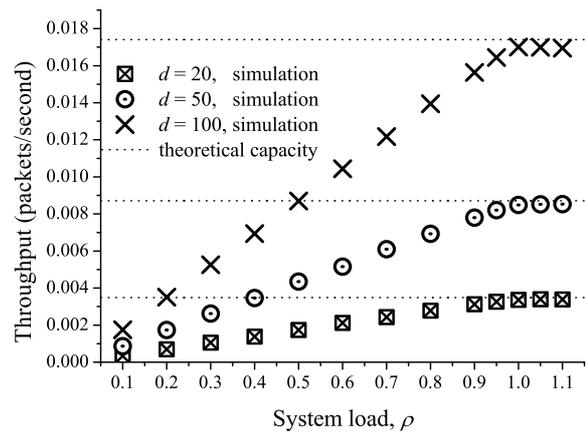}\label{fig:throughput_vs_load_RW}}\\
	\subfloat[Random direction model.]{\includegraphics[width=3.0in]{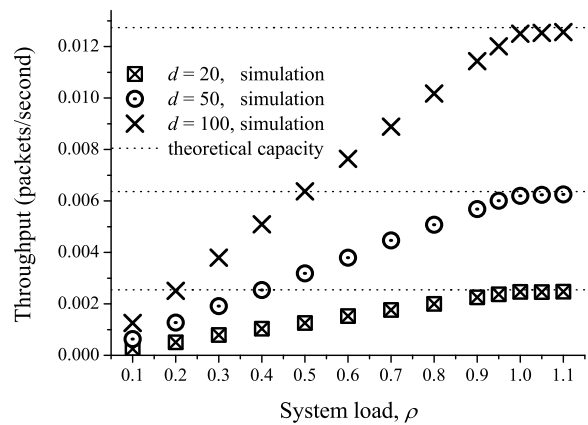}\label{fig:throughput_vs_load_RD}}
		\caption{Throughput vs. system load $\rho$.}\label{fig:throughput_vs_load}
\end{figure}
\begin{figure}
	\centering
		\subfloat[Random waypoint model.]{\includegraphics[width=3.0in]{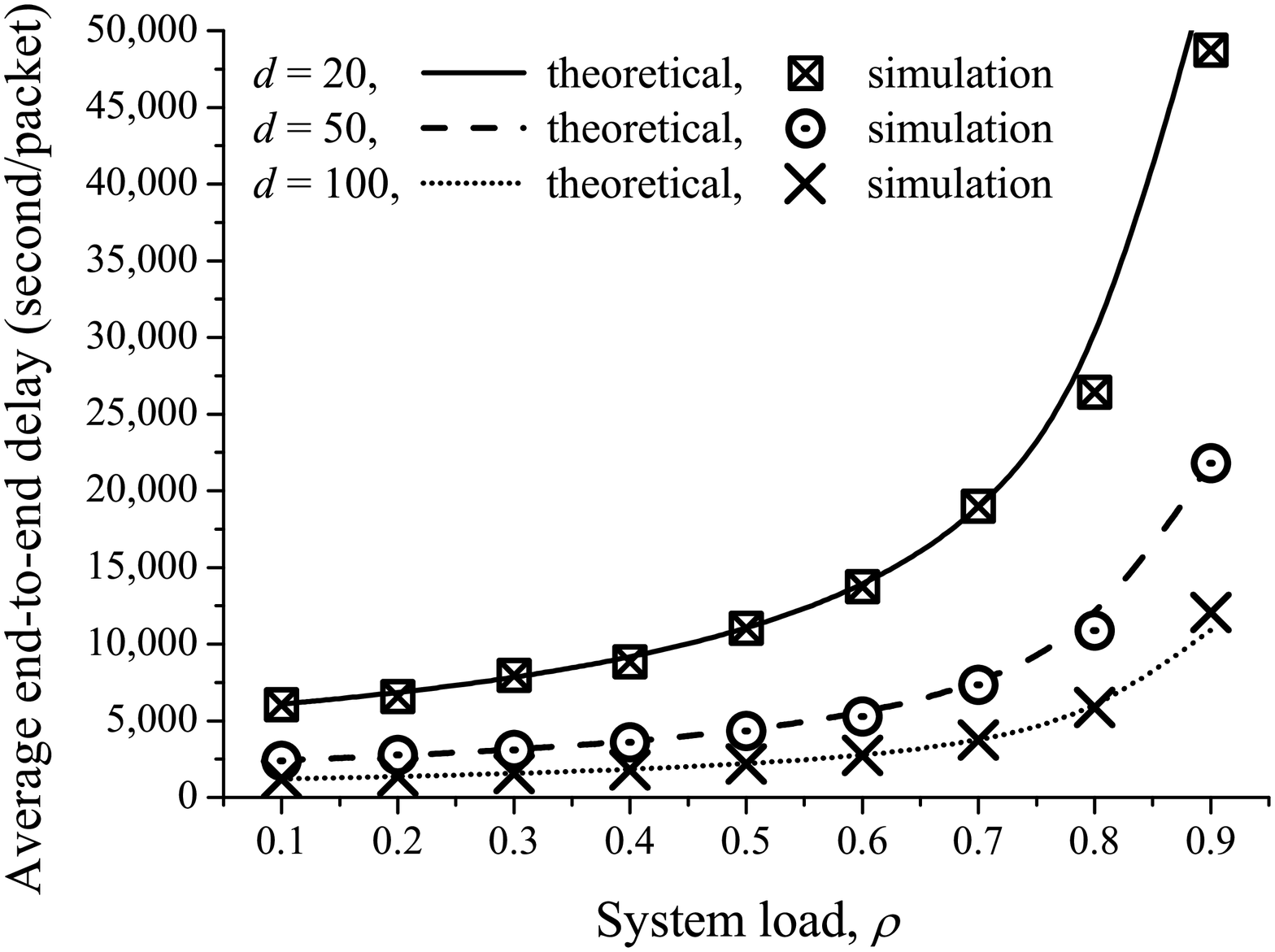}\label{fig:delay_vs_load_RW}}\\
		\subfloat[Random direction model.]{\includegraphics[width=3.0in]{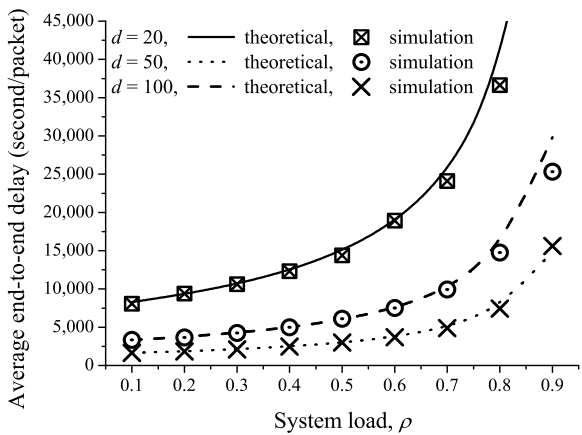}\label{fig:delay_vs_load_RD}}
		\caption{Average end-to-end delay vs. system load $\rho$.}\label{fig:delay_vs_load}
\end{figure}
To validate the efficiency of the  developed throughput capacity model, we summarize in Fig.~\ref{fig:throughput_vs_load} the simulation results of throughput for different values of system load.
In Fig.~\ref{fig:throughput_vs_load}, the dots represent  the  simulation results and the dashed lines are the corresponding theoretical throughput capacities calculated by~(\ref{eqn:appr_capa}).
We can observe from Fig.~\ref{fig:throughput_vs_load} that for both the random waypoint and random direction mobility models, the throughput  increases linearly as $\rho$ increases from $0$ to $1$ and approaches  $\mu$ when $\rho$ grows further beyond $1$.
This is expected since the queuing system in the network is underloaded when $\rho < 1$,
and it saturates as $\rho$ approaches $1$ and beyond.
The results in Fig.~\ref{fig:throughput_vs_load} indicate clearly that our theoretical throughput capacity result developed based on the Poisson meeting process can accurately predict the throughput capacity for the concerned ICMNs with the random waypoint or random direction mobility model.
Moreover, it also indicates that this throughput capacity can be achieved by adopting Algorithm~\ref{alg:routing} as routing algorithm in the network.

%

We then proceed to  validate the efficiency of our end-to-end delay model.
Particularly, we compare in Fig.~\ref{fig:delay_vs_load} the simulation results of the average end-to-end packet delay to those of theoretical ones calculated by substituting the results in~(\ref{eqn:appr_capa}) into~(\ref{eqn:delay}).
We can see from Fig.~\ref{fig:delay_vs_load} that for both the considered mobility models, the theoretical results nicely agree with the simulation ones. 
This observation indicates that our delay model of~(\ref{eqn:delay}) is accurate and can efficiently capture the delay behavior under Algorithm~\ref{alg:routing} in the considered network.

\subsection{Numerical Results and Discussions}

\begin{figure}
	\centering
		\includegraphics[width=3.0in]{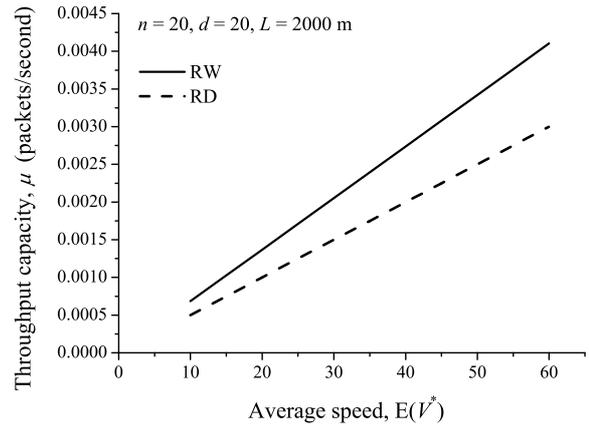}
		\caption{Capacity $\mu$ vs. average speed $\E\{V^*\}$.}
	\label{fig:mu_vs_speed}
\end{figure}
\begin{figure}
	\centering
		\includegraphics[width=3.0in]{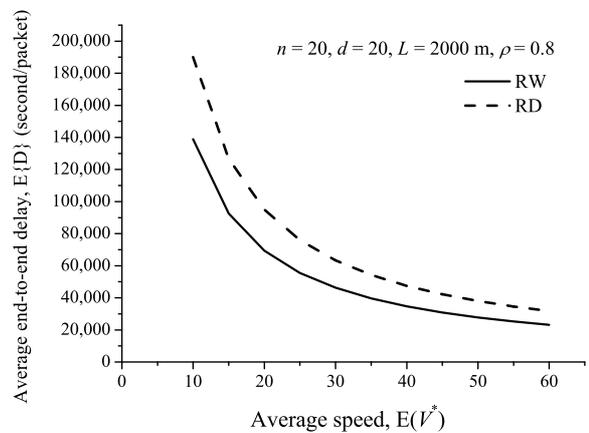}
		\caption{Average end-to-end delay $\E\{D\}$   vs. average speed $\E\{V^*\}$.}
	\label{fig:delay_vs_speed}
\end{figure}
\begin{figure}
	\centering
		\includegraphics[width=3.0in]{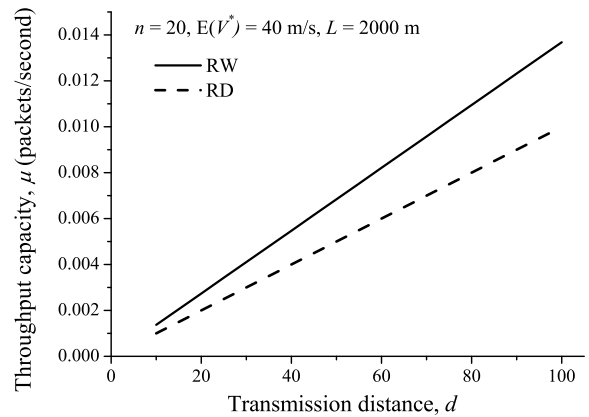}
		\caption{Capacity $\mu$ vs. transmission distance $d$.}
	\label{fig:mu_vs_d}
\end{figure}
\begin{figure}
	\centering
		\includegraphics[width=3.0in]{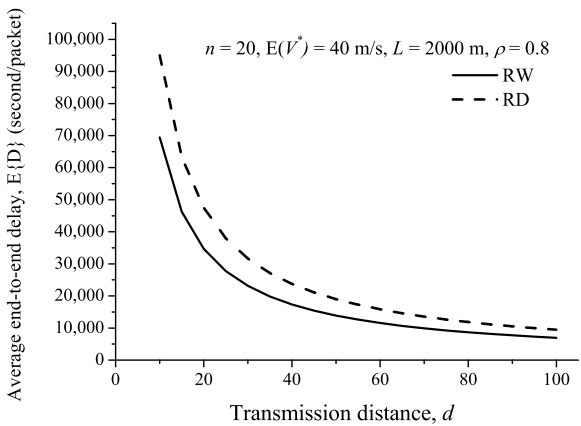}
		\caption{Average end-to-end delay $\E\{D\}$ vs. transmission distance $d$.}
	\label{fig:delay_vs_d}
\end{figure}

Based on our theoretical models, we first explore the impact of nodel traveling speed on the throughput capacity and end-to-end delay. 
We summarize in Fig.~\ref{fig:mu_vs_speed}  how the $\mu$ varies with average pairwise relative speed  $\E\{V^*\}$ in a network of $n = 20, d = 20$ m and $L=2000$ m.
Fig.~\ref{fig:mu_vs_speed}  shows that as the $\E\{V^*\}$ increases, the throughput capacities under  both the random waypoint and random direction models increase linearly.
This is mainly due to that a higher average travel speed will lead to an increase on the pairwise meeting rate as shown in~(\ref{eqn:beta_estimate}), and hence to a higher throughput capacity.
For the same network setting, we then present in Fig.~\ref{fig:delay_vs_speed} how the average delay $\E\{D\}$ under Algorithm~\ref{alg:routing} varies with $\E\{V^*\}$ under system load $\rho = 0.8$.
It can be observed in Fig.~\ref{fig:delay_vs_speed} that increasing $\E\{V^*\}$ will cause a lower average delay, which is because the $\E\{D\}$ is inverse proportional to the throughput capacity $\mu$ as indicated in~(\ref{eqn:delay}).

We then present in Fig.~\ref{fig:mu_vs_d} and~\ref{fig:delay_vs_d} how the throughput capacity $\mu$ and average end-to-end packet delay vary with transmission distance $d$ for a network of $n = 20, \E\{V^*\} = 40$ m/s, $L=2000$ m and $\rho = 0.8$ (for delay).
It can be seen from in Figs.~\ref{fig:mu_vs_d} and~\ref{fig:delay_vs_d} that the impacts  of the transmission distance $d$ on the behavior of capacity and delay are similar to those of the $\E\{V^*\}$, for the reason that as shown in~(\ref{eqn:beta_estimate}),  $d$ is also  a factor in the evaluation of $\beta$.

It is also interesting to see that from Figs.~\ref{fig:mu_vs_speed}-\ref{fig:delay_vs_d} that the random waypoint mobility model provides a performance better than that of the random direction mobility model for the network settings here.
Recall that compared with the random direction model that has a uniform stationary distribution of  nodes location, the stationary distribution of the location of a node under the random waypoint mobility model is more concentrated near the center of the network region (see Section~\ref{sec:numerical:mobility}).
Therefore,  the random waypoint mobility model leads to a higher nodel pairwise meeting rate (see~(\ref{eqn:beta_estimate})) and hence a higher throughput capacity, for the same network setting of $L$,  $\E\{V^*\}$ and $d$.

\section{Conclusions}\label{sec:conclusion}
This paper studied the  throughput capacity and delay-throughput tradeoff in an ICMN with Poisson meeting process. Based on the pairwise meeting rate in the concerned ICMN, an exact expression of the throughput capacity is derived, which indicates the maximum throughput that the network can stably support.
To reveal the inherent relationship between the end-to-end packet delay and achievable throughput, a necessary condition on the delay-throughput tradeoff is also established.
To illustrate the applicability of these theoretical results developed based on the Poisson meeting process, we conducted parameter-matching to fit the random waypoint and random direction models to the Poisson meeting process and obtained approximations to the throughput capacity and delay-throughput tradeoff with these mobility models.
Simulation result demonstrates that the throughput capacity developed based on the Poisson meeting process can serve as a good approximation to that under the random waypoint or random direction mobility models.
It is expected that the theoretical analysis developed in this paper will be also helpful for exploring the throughput capacity and delay-throughput tradeoff in ICMNs under other types of mobility models as well. 
Remark~\ref{remark:corollary_1} indicates that under the random waypoint or random direction mobility, a constant  throughput capacity is achievable even in a large scale ICMN as far as the node density can be kept constant, but at the cost of a linearly increasing expected end-to-end delay. Our results also reveal that by increasing the average node traveling speed or transmission range in an ICMN, an improvement on both its throughput and end-to-end delay performance might be expected.

\bibliographystyle{ieeetran}
\bibliography{IEEEabrv,ref}

\end{document}